\newtheorem{theorem}{Theorem}
\newtheorem{cor}{Corollary}
\newtheorem{conjecture}{Conjecture}
\tikzstyle{block} = [draw, fill=gray!10, rectangle, minimum height=1cm, minimum width=1.5cm]
\begin{document}
\def\whole{}

\title{Source Broadcasting to the Masses: Separation has a Bounded Loss}

\author{\authorblockN{Uri Mendlovic and Meir Feder}
\authorblockA{Department of Electrical Engineering - Systems\\
Tel-Aviv University, Tel-Aviv 69978, Israel \\
Email: \{urimend,meir\}@eng.tau.ac.il}}
\maketitle

\begin{abstract}

This work discusses the source broadcasting problem, i.e. transmitting a source to many receivers via a broadcast channel. The optimal rate-distortion region for this problem is unknown. The separation approach divides the problem into two complementary problems: source successive refinement and broadcast channel transmission.
We provide bounds on the loss incorporated by applying time-sharing and separation in source broadcasting. If the broadcast channel is degraded, it turns out that separation-based time-sharing achieves at least a factor of the joint source-channel optimal rate, and this factor has a positive limit even if the number of receivers increases to infinity.
For the AWGN broadcast channel a better bound is introduced, implying that all achievable joint source-channel schemes have a rate within one bit of the separation-based achievable rate region for two receivers, or within $\log_2 T$ bits for $T$ receivers.


\end{abstract}

\section{Introduction}

\ifdefined\whole
\subsection{Background}
We briefly review the source successive refinement and the degraded broadcast coding problems which are used as the separation approach for source broadcasting.
In successive refinements of a source, let $X$ be a source with known distribution and let $d(x, \hat{x})$ be a distortion measure. One would like to construct $T$ messages or information streams from $X$. These messages are assumed to arrive successively from $T$ to $1$, i.e, if a certain message $W_t$ is available, all messages $W_{t'}$ with $t' > t$ are available as well. For given distortion values $D_1, \ldots, D_t$ (denoted $D_1^T$), where $D_t$ is the maximal allowed distortion resulted from decoding messages $W_t^T$, we ask what the minimal message information rates $R_1^T$ are, where $R_t$ is the information rate of $W_t$. According to \cite{SR91}, some sources (with a certain distortion measure) are successively refinable, meaning that any distortion value is accepted at the same total rate of the classical rate-distortion function, as if successive refinement was not carried. An equivalent definition of successive refinability is the statistical independence of the constructed messages.

The complementary channel coding problem is the degraded broadcast channel coding. In broadcast channel we assume one transmitter and $T$ receivers. The encoder receives $T$ independent messages $W_1^T$ and encodes them into the transmitted signal $X$. Each channel output $Y_t$ is available at receiver $t$, and the $t$-th decoder should recover message $W_t$. The channel is defined according to $T$ channel distributions $\{w_t(y_t | x)\}_{t=1}^T$. We call a broadcast channel stochastically degraded if one can choose a matching joint distribution $w(y_1, ..., y_T | x)$ that constructs a Markov chain $X-Y_1-...-Y_T$ consistent with the marginal distributions $\{w_t(y_t | x)\}_{t=1}^T$. The capacity region of degraded broadcast channels is known \cite{Cover72} \cite{Cover98}.
\subsection{Contribution}
\fi

This work studies the problem of distortedly transmitting a source with known distribution and distortion measure to $T$ receivers, through a broadcast channel.
The optimal solution for this problem, attained by a joint source-channel coding, is unknown. This paper examines the loss attained by using the sub-optimal separation approach for this problem which divides the problem into source successive refinement (which is a source coding problem) and broadcast channel coding.
A special interest is devoted to the case where the number of receivers is large, showing that the separation loss is bounded even as the number of receivers grows to infinity.

We assume in the sequel that each receiver has a distortion requirement $D_t$ it cannot exceed. We use the traditional notation for the source rate-distortion function $R(D)$, as defined in the point-to-point sense, and the shortened notation $R_t = R(D_t)$ to denote the rate distortion associated with the $t$-th receiver.
The meaningful quantity in source-channel coding is the \emph{rate}, denoted $r$, of transmitted source symbols per channel use. This quantity measures the performance of a coding scheme, because it measures the amount of channel resource needed to achieve the required distortions.

The first part of the paper studies the case of combining a successively refinable source with a general degraded broadcast channel.
In this part we provide bounds on the loss incorporated by applying separation and even time-sharing compared with the optimal (unknown) joint coding for source broadcasting. Interestingly, it turns out that in this case separation-based time-sharing achieves at least a (small) multiplicative factor of the joint source-channel optimal rate, and this factor has a positive limit even if the number of receivers increases to infinity.

The second part of the paper considers broadcasting a successively refinable source over an additive white Gaussian noise (AWGN) broadcast channel. The achievable rate and the optimal coding method are unknown in this case as well. For this AWGN case we prove a stronger additive loss bound stating that any achievable rate, measured per complex degree of freedom, is within one bit of the separation-based achievable region for two receivers, or within $\log_2 T$ bits for $T$ receivers.

The conclusion of these results is that separation-based codes and even applying time-sharing for source broadcasting to many receivers provide a good practical compromise between simple design and computational complexity and transmission performance, and this compromise holds regardless of the amount of receivers.

\subsection*{Existing Work}

Many works studied the problem of joint source-channel coding for source broadcasting, some regards the case of successive refinement through broadcasting. Some works propose practical coding techniques and schemes for a concrete source or channel \cite{HDA}, \cite{expansion}.

Other works considered this problem from an information theoretic perspective, providing performance bounds and partial capacity region characterization. Tian \textit{et al.} studied the performance of separation-based coding scheme for the special case of Gaussian sources \cite{Tian09} and more generally for sources with mean squared error distortion measure \cite{Tian10}. Together with a general bound, we present a particular result for the AWGN broadcast channel which is analogous to their work, but the bound in \cite{Tian09} \cite{Tian10} holds for a special type of source and arbitrary channel, while our result is given in terms of the Gaussian channel, and for arbitrary successively refinable source.

\subsection*{Paper outline}

The rest of this paper is organized as follows. Section \ref{sec:bounds} provides multiplicative bounds on the rate of separation/time-sharing compared with the joint source-channel coding rate.
\ifdefined\whole
Section \ref{sec:discussion} explores some properties of the bounds.
\fi
Section \ref{sec:awgn} addresses the AWGN broadcast channel by introducing a tighter additive bound. Finally, Section \ref{sec:future} concludes the paper and suggests some future work.

\section{Multiplicative Rate Bounds}
\label{sec:bounds}

\subsection{A Naive Lower Bound}
\label{subsec:naive_bound}

We suggest the following trivial lower bound on the rate achievable using separation, which holds for both degraded and non-degraded broadcast channels:

\begin{theorem}
\label{theorem:naive_bound}
Source-channel separation in broadcast channel reduces the optimal achievable rate by a factor not greater than the number of receivers.
\end{theorem}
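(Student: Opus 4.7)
\medskip

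\noindent\textbf{Proof proposal.} The plan is to exhibit an explicit separation-based (in fact time-sharing based) scheme whose rate is at least $1/T$ times any joint source-channel rate. The key observation is that the optimal joint rate $r^*$ cannot exceed the best single-user rate to any one receiver, because the joint scheme is automatically a legal scheme for each individual receiver. This immediately upper bounds $r^*$ by $\min_t C_t/R_t$, where $C_t$ is the point-to-point capacity from the transmitter to receiver $t$ and $R_t=R(D_t)$.

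Next I would construct the matching achievability. Split the $n$ channel uses into $T$ disjoint blocks of length $n/T$, and dedicate the $t$-th block to receiver $t$ only. During this block the problem reduces to a single-user point-to-point source-channel coding problem, so by the classical (Shannon) separation theorem one can reliably deliver up to $n C_t / T$ bits to receiver $t$, and therefore describe up to $n C_t / (T R_t)$ source symbols within distortion $D_t$. Since this must hold simultaneously for all $T$ receivers using the same source block, the source-to-channel rate achievable by this time-sharing scheme is
\begin{equation*}
r_{\mathrm{TS}} \;=\; \min_{t=1,\ldots,T} \frac{C_t}{T\,R_t} \;=\; \frac{1}{T}\,\min_{t=1,\ldots,T}\frac{C_t}{R_t}.
\end{equation*}

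Combining the two parts yields $r_{\mathrm{TS}} \ge r^*/T$, which is exactly the claim: the loss incurred by separation (and in fact by the cruder time-sharing scheme) is at most a factor of $T$. Note that the argument never uses degradedness of the broadcast channel, consistent with the remark in the statement that it applies to general (possibly non-degraded) broadcast channels.

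The only subtle point, which I would state carefully rather than compute, is the upper bound $r^* \le C_t/R_t$ for every $t$: it follows from applying the converse of the point-to-point joint source-channel coding theorem to the marginal channel $w_t(y_t\mid x)$ together with the distortion requirement $D_t$ at receiver $t$. Aside from this, the proof is essentially an accounting argument, so I do not expect any real obstacle; the bound is loose on purpose, and its value is mainly as a baseline to be refined by the stronger results in the rest of the section.
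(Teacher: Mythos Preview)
Your proposal is correct and essentially identical to the paper's argument: dedicate a $1/T$ fraction of the channel uses to each receiver and invoke the point-to-point separation theorem per receiver, so the resulting time-sharing rate is at least $r^*/T$. The only cosmetic difference is that you make the intermediate bounds $r^*\le C_t/R_t$ and $r_{\mathrm{TS}}=\tfrac{1}{T}\min_t C_t/R_t$ explicit, whereas the paper phrases it more tersely by noting that single-user separation to any fixed receiver already recovers the original joint rate $r$.
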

\begin{proof}
One can apply point-to-point channel separation on any single receiver, achieving the original source symbols to channel symbols rate while transmitting only to that single receiver at its desired distortion.
By applying this method on every receiver, the rate is reduced by a factor equals to the number of receivers.
\end{proof}

Note that Theorem \ref{theorem:naive_bound} holds for the non-degraded broadcast scenario as well as for the degraded scenario. It holds for arbitrary correlated messages with arbitrary distortion measures, including the case of a non-successively refinable source.

The only requirement for Theorem \ref{theorem:naive_bound} to hold is that the original problem can be described as a group of individual problems, one for each receiver, and each of these point-to-point problems has the separation property. In our case of source refinement this holds because we split the problem into point-to-point rate-distortion problems.

\subsection{A Refined Lower Bound}
\label{subsec:bc:refined_bound}

We would like to refine Theorem \ref{theorem:naive_bound} for the case of degraded broadcast channels with a successively refinable source.

First assume only two destinations, $T=2$. Consider a source that is transmitted at a certain source symbols per channel symbols rate $r$ without applying separation. Using the standard successive refinements method, one can encode the source into 2 messages $W_1, W_2$, and then transmit $W_2$ to the second receiver with a point-to-point channel code. Applying the point-to-point source-channel separation theorem on the second (degraded) receiver, this scheme can accomplish the second receiver requirement at rate of no less than the original rate $r$.

Due to the degraded nature of the channel, $W_2$ is decodable at the first receiver as well, thus the first decoder can achieve its desired distortion by receiving $W_1$. Since the joint source channel scheme enables a distortion $D_1$ at the first receiver with $r$ symbols of per channel use, the channel to the first receiver can carry an information rate of $R_1 \cdot r$ bits per channel use, where, as we recall, $R_1, R_2$ are the rate distortions of the source for the decoders distortion requirements, respectively.
%
%
Note that because the source is successively refinable, $W_1, W_2$ have information rates of $(R_1-R_2), R_2$, respectively. Recalling that the first receiver already received $W_2$ when it had been transmitted to the second receiver, it only needs another $\frac{R_1 - R_2}{R_1 \cdot r}$ channel uses per source symbol to transmit the missing information. The total amount of channel uses per source symbol in this separation scheme is
\[\frac{1}{r} + \frac{R_1 - R_2}{R_1 \cdot r}, \]
and thus the original transmission rate $r$ is reduced by a factor of
\[1 + \frac{R_1 - R_2}{R_1} \]

This bound coincides with the bound from Theorem \ref{theorem:naive_bound} when the successive refinement messages satisfy $R_1 \gg R_2$. But as the desired distortions become similar to each other, i.e. when their rate-distortion values are close,  $R_1 \approx R_2$, separation loss becomes negligible according to the refined bound. In the Theorem below this result is generalized to the following lower bound:

\begin{theorem}
\label{theorem:refined_bound}
In a $T$-users successive refinement over degraded broadcast channel, if the rate-distortion of the $t$-th user equals to $R_t$ then applying source-channel separation reduces the achievable rate by a factor not greater than:
\begin{equation}
\label{eq:refined_bound_RT}
1 + \sum_{t=1}^{T-1} \frac{R_t-R_{t+1}}{R_t} =
T - \sum_{t=1}^{T-1} \frac{R_{t+1}}{R_t}
\end{equation}

If the receiver's rate distortion values are in an interval $[R_{min}, R_{max}]$, then  this factor can be bounded by
\begin{equation}
\label{eq:refined_bound_T}
T - (T-1) \cdot (R_{min}/R_{max})^{1/(T-1)},
\end{equation}
which approaches
\begin{equation}
\label{eq:refined_bound}
1 + \ln(R_{max}/R_{min})
\end{equation}
as the number of receivers increases to infinity.
\end{theorem}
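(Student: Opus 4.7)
The plan is to follow the approach already sketched for $T=2$ and extend it recursively down the chain of receivers, and then to optimize the resulting expression over admissible rate sequences using a standard AM--GM argument together with a Taylor expansion for the limit.

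First I would describe the separation scheme explicitly for general $T$. Let $W_1,\dots,W_T$ be the successive refinement messages; by successive refinability their incremental rates are $R_t-R_{t+1}$ for $t=1,\dots,T-1$ and $R_T$ for the coarsest layer, summing to $R_1$. Starting from a hypothetical joint source-channel scheme of rate $r$ that meets all $T$ distortion requirements, I argue inductively: by the point-to-point separation theorem applied at receiver $T$, the ``worst'' channel can reliably carry $R_T\cdot r$ bits per channel use, so $W_T$ can be delivered to all receivers (thanks to degradation) in $1/r$ channel uses per source symbol. Proceeding to receiver $T-1$, the point-to-point channel to it supports rate $R_{T-1}\cdot r$ bits per channel use, hence it only needs an additional $(R_{T-1}-R_T)/(R_{T-1}\cdot r)$ channel uses per source symbol to obtain $W_{T-1}$; again degradation ensures all better receivers also recover this layer. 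Iterating, the total number of channel uses per source symbol is
\[
\frac{1}{r}\left(1+\sum_{t=1}^{T-1}\frac{R_t-R_{t+1}}{R_t}\right),
\]
which immediately yields the multiplicative loss factor in \eqref{eq:refined_bound_RT}.

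Next I would bound \eqref{eq:refined_bound_RT} uniformly over all decreasing sequences $R_{\max}\ge R_1\ge R_2\ge\cdots\ge R_T\ge R_{\min}$. Writing the factor as $T-\sum_{t=1}^{T-1}R_{t+1}/R_t$, maximizing it is equivalent to minimizing the sum $\sum_{t=1}^{T-1}R_{t+1}/R_t$. The product of these ratios telescopes to $R_T/R_1\ge R_{\min}/R_{\max}$, so by the AM--GM inequality
\[
\sum_{t=1}^{T-1}\frac{R_{t+1}}{R_t}\ge (T-1)\left(\frac{R_T}{R_1}\right)^{1/(T-1)}\ge (T-1)\left(\frac{R_{\min}}{R_{\max}}\right)^{1/(T-1)},
\]
giving \eqref{eq:refined_bound_T}. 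Equality in AM--GM is achieved by the geometric progression $R_t=R_{\max}(R_{\min}/R_{\max})^{(t-1)/(T-1)}$, showing the bound is tight as a function of $(R_{\min},R_{\max})$.

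Finally, for the limit I set $x=R_{\min}/R_{\max}\in(0,1]$ and $u=1/(T-1)$, so the bound becomes $1+u^{-1}(1-x^{u})$. Using $x^u=\exp(u\ln x)=1+u\ln x+O(u^2)$ as $u\to 0^+$, one has $u^{-1}(1-x^u)\to -\ln x=\ln(R_{\max}/R_{\min})$, yielding \eqref{eq:refined_bound}. The only non-mechanical step is the inductive construction of the layered separation scheme; everything else (AM--GM, telescoping, Taylor expansion) is routine, so the main obstacle is to state the recursive scheme carefully enough that the appeal to point-to-point separation at each stage, combined with stochastic degradation, is clearly valid.
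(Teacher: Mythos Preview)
Your proposal is correct and follows essentially the same approach as the paper: the same layered time-sharing scheme (serve receiver $T$ first, then add incremental layers up to receiver $1$), the same telescoping sum for the rate-loss factor, and the same geometric-progression optimizer followed by the same limit computation. The only cosmetic difference is that you reach the geometric optimizer via a global AM--GM argument on the $T-1$ ratios with their telescoping product constrained, whereas the paper does a coordinate-wise optimization showing $R_t=\sqrt{R_{t-1}R_{t+1}}$; both are routine and yield the identical bound.
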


\begin{proof}
We broaden the above argument to arbitrary number of users $T$. First transmit the complete message to the last, most degraded receiver, using the original number of channel uses per source symbol $1/r$. Then, transmit the remaining $R_{T-1}-R_T$ information to the $(T-1)$-th receiver. Because the original unseparated scheme was able to transmit to this receiver at information rate of $R_{T-1} \cdot r$ per channel use, we can carry the additional information at $(R_{T-1}-R_T) / (r \cdot R_{T-1})$ channel uses per source symbol. In this manner the $t$-th receiver can accomplish its distortion limit after the $(t-1)$-th receiver is satisfied by adding at most $(R_t-R_{t+1}) / (r \cdot R_t)$ channel uses per source symbol. Summing up all the transmission phases, the total number of channel uses per source symbol is:
\begin{equation*}
\label{eq:ratio_series}
\frac{1}{r} + \sum_{t=1}^{T-1} \frac{R_t-R_{t+1}}{r \cdot R_t},
\end{equation*}
leading to the upper bound (\ref{eq:refined_bound_RT}) on the rate reduction ratio.

To maximize this bound one should choose $R_1=R_{max}$ and $R_T=R_{min}$. As for the maximization over remaining $R_2,\ldots,R_{T-1}$, for each $t$, the expression
\[ - \frac{R_{t+1}}{R_t} - \frac{R_t}{R_{t-1}}\]
is maximized for
\[ R_t = \sqrt{R_{t-1}R_{t+1}} \]
so the bound is maximized by choosing a geometric sequence of rate distortions:
\[ \frac{R_{t+1}}{R_t} = (R_{min}/R_{max})^{1/(T-1)} \quad\forall t=1,..,T-1\]
Substituting this ratio in (\ref{eq:refined_bound_RT}) results in (\ref{eq:refined_bound_T}). Eq. (\ref{eq:refined_bound}) is immediately derived by recalling that
\begin{multline*}
\lim_{T \to \infty} (T-1) \cdot (1 - (R_{min}/R_{max})^{1/(T-1)}) = \\
\ln(R_{max}/R_{min})
\end{multline*}
\end{proof}

\subsection{A Channel-Oriented Bound}
\label{subsec:channel}

In the above theorems we analyzed the performance of a simple but useful separation-based approach for broadcasting successive refinements of a source. We bounded the loss relatively to the optimal performance in terms of source rate-distortion values of interest. An analogous argument can be stated in terms of the channel \emph{point-to-point capacities} $\{C_t\}$, i.e. the capacities of transmission to each receiver. Using the time-sharing method described above, it is easy to see that all above results are valid, where $R_t, R_{max}$ and $R_{min}$ are replaced by their respective terms of the channel point-to-point capacities $C_t, C_{max}$ and $C_{min}$.

Given a source-channel separation scenario, one can compute two bounds: using the rate-distortion values $\{R_i\}$ and using the point-to-point capacities $\{C_i\}$. In fact, a more precise bound can be stated using both terms:

\begin{theorem}
\label{theorem:capacity_bound}
In a $T$-users successive refinement over degraded broadcast channel, if the rate-distortion of the $t$-th user equals $R_t$ and its point-to-point capacity is $C_t$ then applying source-channel separation reduces the achievable rate by a factor not greater than:
\begin{equation}
\label{eq:capacity_bound_RT}
\frac
{\frac{R_T}{C_T} + \sum_{t=1}^{T-1} \frac{R_t-R_{t+1}}{C_t}}
{\max_{t=1,\ldots,T} \left( \frac{R_t}{C_t} \right)}
\end{equation}
\end{theorem}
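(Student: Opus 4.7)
The plan is to compute two quantities separately — (i) the total channel-uses cost of a layered time-sharing separation scheme, now bounded via the individual point-to-point capacities $C_t$ instead of the joint scheme's working information rate as in Theorem \ref{theorem:refined_bound}, and (ii) an information-theoretic upper bound on the optimal joint source-channel rate $r^*$ — and then to divide the first by the reciprocal of the second. Both ingredients reuse the mechanics of Theorem \ref{theorem:refined_bound}; the only change is that the per-layer cost is charged against the true capacity of the marginal channel to receiver $t$, yielding a sharper ratio.

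For the achievability side, I apply successive refinement to produce a base message of per-source-symbol rate $R_T$ and increments of rates $R_t - R_{t+1}$ for $t < T$, and transmit them in the order $W_T, W_{T-1}, \ldots, W_1$ using point-to-point capacity-achieving codes matched to the respective receivers. Degradedness ensures that whenever $W_t$ is decoded correctly at receiver $t$ it is also decoded at receivers $1,\ldots,t-1$, so receiver $t$'s distortion requirement is satisfied once its own phase ends. The cost of phase $t$ is $(R_t - R_{t+1})/C_t$ channel uses per source symbol (and $R_T/C_T$ for the base phase), summing to the numerator of (\ref{eq:capacity_bound_RT}).

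For the converse side, fix any joint scheme of rate $r^*$. Receiver $t$ must recover enough information to meet distortion $D_t$, which by the point-to-point rate-distortion converse demands at least $R_t$ bits per source symbol, i.e.\ $r^* R_t$ bits per channel use. Viewing the transmitter-to-receiver-$t$ marginal as a stand-alone point-to-point channel (nothing is gained by keeping the other receivers' outputs attached), this information rate cannot exceed $C_t$, hence $r^* \le C_t/R_t$ for every $t$, so $r^* \le 1/\max_t (R_t/C_t)$. Multiplying this upper bound by the total channel uses per source symbol of the separation scheme produces exactly (\ref{eq:capacity_bound_RT}).

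The only non-routine step is the converse: one must justify using the stand-alone point-to-point capacity $C_t$ as an upper bound on the information rate delivered to receiver $t$ under the broadcast constraint. This reduces to the familiar observation that discarding the other receivers can only shrink the feasible region, so the $t$-th receiver's marginal channel is no better than its point-to-point counterpart. Once this is granted, the rest is arithmetic on the two expressions derived above.
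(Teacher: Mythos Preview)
Your proposal is correct and follows essentially the same approach as the paper: both compute the separation scheme's channel-use cost as $R_T/C_T+\sum_{t=1}^{T-1}(R_t-R_{t+1})/C_t$ via the layered time-sharing argument of Theorem~\ref{theorem:refined_bound}, bound the optimal joint rate by $r^*\le C_t/R_t$ for every $t$ via the point-to-point source--channel separation theorem applied to each marginal channel, and take the ratio. Your explicit remark that degradedness guarantees decodability of $W_t$ at receivers $1,\ldots,t-1$ and your justification of the converse step (discarding the other outputs can only shrink the feasible set) make the argument slightly more self-contained than the paper's version, but the substance is identical.
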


\ifdefined\whole
\begin{proof}
By applying the point-to-point source channel separation on the $t$-th receiver, a joint source-channel coding scheme can transmit no more than $R_t/C_t$ source symbols per channel use. Combining all these limitations, joint coding cannot exceed
\[ \max_{t=1,\ldots,T} \left( \frac{R_t}{C_t} \right) \]
channel uses per source symbol.

On the other hand, one can use the presented separation method: starting by transmission to the most remote receiver, which demands $R_T/C_T$ channel uses per source symbol. Inductively, the $t$-th receiver demands additional information rate of $R_t-R_{t+1}$ bits beyond the requirement of the $(t+1)$-th receiver. This information rate can be transmitted to the $t$-th receiver using $(R_t-R_{t+1}) / C_t$ channel uses per source symbol. Thus this separation method achieves a rate of
\[ \frac{R_T}{C_T} + \sum_{t=1}^{T-1} \frac{R_t-R_{t+1}}{C_t} \]
channel uses per source symbol.

Combining the best performance of the joint coding with the achievable result of the presented separation method, the ratio of the achievable rates of joint coding versus separation-based coding is no more than the value given by eq. (\ref{eq:capacity_bound_RT}).
\end{proof}
\else
The proof of this Theorem is given in the complete version of this paper, \cite{Complete}.
\fi

For a given rate-range $[R_{min}, R_{max}]$ and point-to-point capacities range $[C_{min}, C_{max}]$, it can be easily shown that the worst-case maximal bound of Theorem \ref{theorem:capacity_bound} coincides with the result of Theorem \ref{theorem:refined_bound} (i.e., (\ref{eq:refined_bound_T}) and (\ref{eq:refined_bound})) with the bigger ratio:
 \[T - (T-1) \cdot \max ( R_{max}/R_{min}, C_{max}/C_{min})^{1/(T-1)}.\]

\ifdefined\whole
\section{Discussion}
\label{sec:discussion}

\subsection{Tightness}
\label{subsec:tightness}
\else
\subsection{Discussion}
\label{subsec:discussion}
\fi
Fig. \ref{fig:classes} demonstrates the hierarchy of coding classes. Joint source channel coding is the most general class, and thus achieves the maximal performance. One can separate the coding into source-coding and channel-coding. This class has a single letter capacity expression, and coding schemes in this class may be simpler to implement. The third most-limited class of interest is separated coding schemes that use a time-sharing approach for the channel coding. The results of section \ref{sec:bounds} actually bound the ratio between optimal performance of the first and third classes, thus incidentally bound the ratio of the first and second classes, namely the separation loss.

\ifdefined\whole
We argue that any separation-based coding may outperform the result stated in theorem \ref{theorem:refined_bound} only due to improvement upon abandonment of the time-sharing nature of our solution. The optimal separation strategy will have actual performance according to the minimal amount of channel uses per symbol required to carry information at the rate-tuple $(R_1-R_2, ..., R_{T-1}-R_T, R_T)$ using the well-known degraded broadcast coding scheme.
\fi

\begin{figure}
\centering
\includegraphics[width=0.8\linewidth]{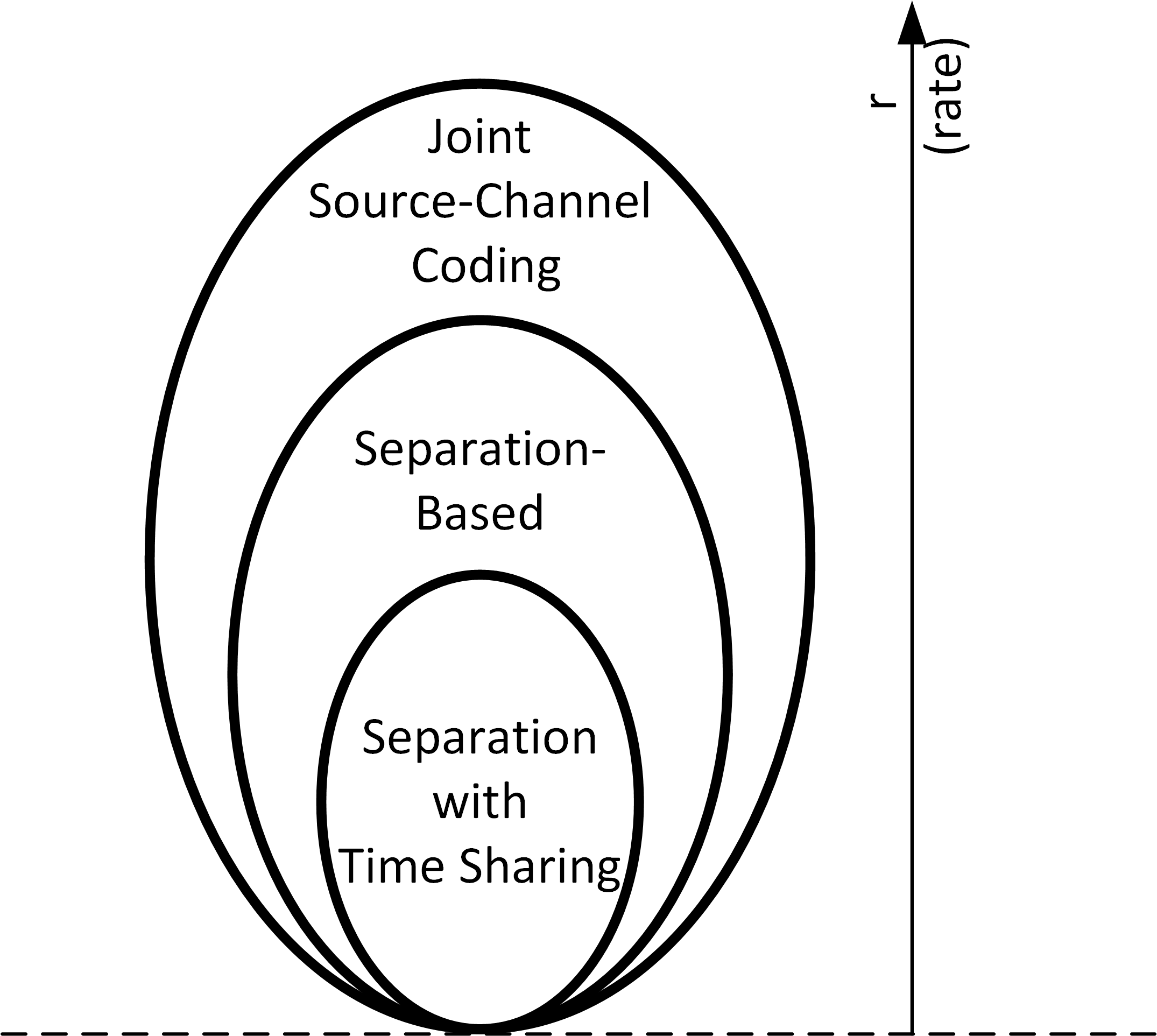}
\caption{Euler diagram of coding classes.}
\label{fig:classes}
\end{figure}

In some broadcast channels time-sharing is optimal or near-optimal. For example, in erasure broadcast channel time-sharing is optimal \cite{BEC}, and in spread-spectrum broadcasting it is near-optimal \cite{Lapidoth}. In these cases our bound becomes tight because the degraded broadcast capacity region is a simplex.

When the actual capacity region is significantly better than time-sharing, separation may cause much less loss than the suggested bounds. In the extreme case, the capacity region is a separable region, and separation will inflict no loss, while our bound may show a significant loss. In section \ref{sec:awgn} we discuss explicitly the Gaussian example. It turns out that for low SNR our bound is tight with the loss of separation, while at high SNR separation approaches the performance of the optimal joint source-channel scheme.

\subsection{Non-Degraded Broadcast Channel}
\label{subsec:non-degraded}
Although Theorem \ref{theorem:naive_bound} is applicable to any source-refinement-via-broadcasting scenario, Theorems \ref{theorem:refined_bound} and \ref{theorem:capacity_bound} require a successively refinable source and a degraded broadcast channel. In this subsection we show that Theorem \ref{theorem:refined_bound} also holds for non-degraded broadcast channels.

The problem of transmitting a successively refinable source through a non-degraded broadcast channel discussed earlier in \cite{KM77}, as a special case of transmitting \emph{degraded messages} over non-degraded broadcast channel. Therefore applying separation on this scenario breaks it into two solved problems: successive refinement and transmitting degraded messages over broadcast channel.

A careful inspection of Theorem \ref{theorem:refined_bound} reveals that the inductive argument holds in this case as well.
Suppose that $W_{t+1},\ldots,W_T$ has already been transmitted to the first $t$ receivers and we would like to transmit $W_t$ to the first $t$ receivers. For degraded channels we argued that this can be done using no more than $(R_t-R_{t+1}) / (r \cdot R_t)$ channel uses per source symbol. This bound holds for non-degraded channels as well, because this problem may be seen as transmitting $W_t$ through a \emph{compound channel} combining the channels of the first $t$ receivers. If the optimal joint source-channel coding operates at the rate of $r$ source symbols for channel use, it is at least able to transmit $W_t^T$ to the first $t$ receivers at this rate, achieving information rate of $R_t$ bits per channel use in this compound channel. Because source-channel separation holds for compound channels \cite{compound}, a separation-based scheme can achieve this information rate for transmitting $W_t$. Because the information of $W_t$ is $(R_t-R_{t+1})$ one needs no more than $(R_t-R_{t+1}) / (r \cdot R_t)$ channel uses per source symbol of $W_t$.

\ifdefined\whole
\subsection{Span of Rates}
\label{subsec:scalability}
Note that our bound may suggest that separation is a wasteful approach in the general case, but in practical systems the transmitted data has a characteristic rate-distortion region that allows only a limited loss. Even as the rate scale $R_{max}/R_{min}$ becomes larger the separation loss bound increases slowly due to the logarithmic behavior of the bound.

For example, compressed video data rates may vary between dozens of Kbits per second for the minimal quality videophone to dozens of Mbits per second for the finest high definition (HD) video. Although the rate distortion function has three orders of magnitude scale ($R_{max}/R_{min} = 1000$), our bound (\ref{eq:refined_bound}) results in a factor of $8$, regardless of the number of receivers. Looking at the bound from the channel capacities perspective, a TV signal may be broadcasted to a region where, even in a spread non-homogenous region the SNR may vary by, say $30-40$ dB. Depending on the SNR of the worst receiver, our bound indicates a rate loss by a factor of about $3$ in these extreme conditions. In the next section we introduce a tighter bound for the AWGN scenario.
\fi

\section{Additive Rate Bound for the AWGN Broadcast Channel}
\label{sec:awgn}
\ifdefined\whole
\subsection{Distance Bound}
\label{subsec:awgn_bound}
\fi
 Consider now a degraded additive white Gaussian noise broadcast channel with input power $P$ and noise power $N_t$ at the $t$-th receiver.
In this section we prove that any achievable rate measured per complex degree of freedom is within one bit of the separation-based achievable region for two receivers, or within $\log_2 T$ bits for $T$ receivers.

\begin{figure}
\centering
\includegraphics[width=0.7\linewidth]{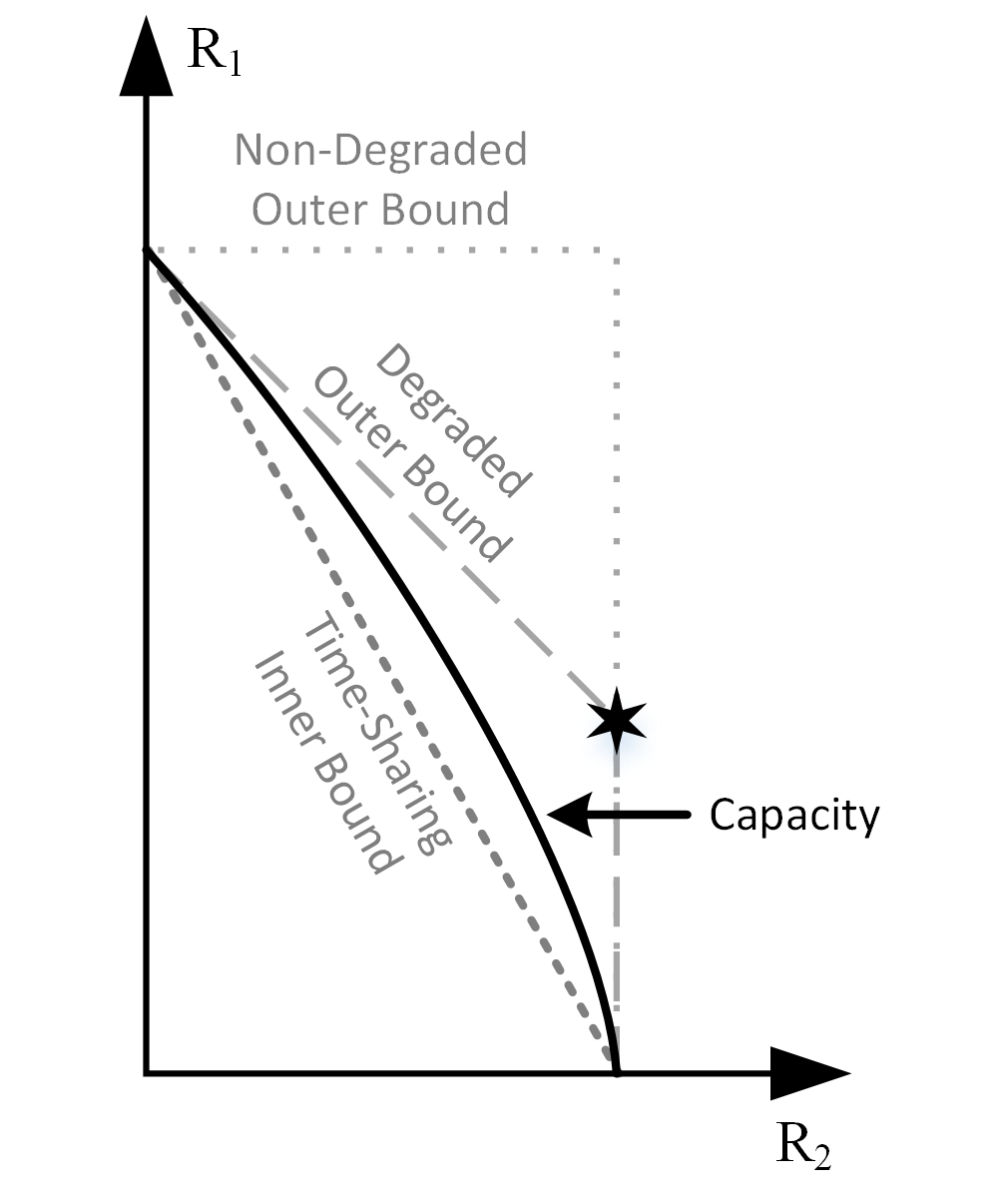}
\caption{The studied regions of the rate-space and the best achievable point for joint coding over a degraded channel.}
\label{fig:regions}
\end{figure}

To bound the difference we study the best achievable point, defined as a rate tuple that simultaneously achieves all point-to-point capacities. This point is marked with a star in Fig. \ref{fig:regions}. Theorem \ref{theorem:refined_bound} bounds the distance of this point from the time-sharing achievable region (relative to the distance of the point from the origin). In this section we bound the absolute distance of the point from the separation-based capacity region.

\begin{theorem}
\label{theorem:awgn}
For any transmission of a source to $T$-receivers over an degraded AWGN broadcast channel, if a distortion-tuple $D_1,\ldots,D_T$ is simultaneously achievable using $1/r$ channel uses per source symbols, then the derived rate-tuple, defined as the rate-distortion of each receiver normalized to a single channel use ($rR(D_t)$) is within
\[ C_{AWGN}\left(\frac{P}{P + N_T}
\sum_{t = 1}^{T-1} \left( 1 - \frac{N_t}{N_{t+1}}\right) \right) \]
of the classical separation-based capacity region of the channel, where $C_{AWGN}(x)=\log(1+x)$.
\end{theorem}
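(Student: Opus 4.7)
The strategy is to exhibit an explicit superposition power allocation for the degraded AWGN BC whose separation-based cumulative rate at every receiver sits within the stated additive gap of the corresponding point-to-point capacity. First, by the point-to-point source--channel converse applied at each receiver, any achievable rate tuple obeys $r R(D_t) \le \log(1 + P/N_t) =: C_t$, so the derived rate tuple is dominated componentwise by the star point $(C_1, \ldots, C_T)$ indicated in Fig.~\ref{fig:regions}. It therefore suffices to exhibit a point of the separation region that is within the stated gap, componentwise, of the star.

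Parameterize the separation region by a superposition allocation $\beta_1, \ldots, \beta_T$ with $\sum_t \beta_t = P$, write $B_t = \sum_{s\le t} \beta_s$, and recall that in a degraded AWGN BC the cumulative rate decoded at receiver~$t$ (who peels off layers $T, T-1, \ldots, t$) is
\[
\mathrm{cum}_t \;=\; \sum_{s=t}^{T} \log\!\left(1 + \frac{\beta_s}{N_s + B_{s-1}}\right).
\]
Let $\Delta_t = C_t - \mathrm{cum}_t$ be the per-receiver gap. The key idea is to choose $\beta$ so that the $t$-th layer rate absorbs exactly the drop $C_t - C_{t+1}$ for every $t < T$:
\[
\log\frac{N_t + B_t}{N_t + B_{t-1}} \;=\; C_t - C_{t+1} \;=\; \log\frac{(N_t + P)\,N_{t+1}}{N_t\,(N_{t+1} + P)} .
\]
Starting from $B_0 = 0$ and iterating, I would guess, and then verify by induction on $t$, the closed form
\[
B_t \;=\; \frac{P\,N_{t+1}}{P + N_{t+1}}\; A_t,\qquad A_t := \sum_{s=1}^{t}\!\left(1 - \frac{N_s}{N_{s+1}}\right).
\]
The inductive step reduces, after clearing denominators, to the elementary identity $(A_t - A_{t-1})\,N_{t+1} = N_{t+1} - N_t$, which holds by definition of $A_t$.

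With this allocation the telescoping $\sum_{s=t}^{T-1}(C_s - C_{s+1}) = C_t - C_T$ gives $\mathrm{cum}_t = C_t - C_T + R_T^{\mathrm{layer}}$, so $\Delta_t = C_T - R_T^{\mathrm{layer}}$ is the \emph{same constant} for every $t$. Substituting the closed form for $B_{T-1}$ one computes
\[
\Delta_t \;=\; \log\!\left(1 + \frac{B_{T-1}}{N_T}\right) \;=\; C_{AWGN}\!\left(\frac{P}{P + N_T}\sum_{t=1}^{T-1}\!\left(1 - \frac{N_t}{N_{t+1}}\right)\right),
\]
matching the bound in the statement. Shifting the star down by this common amount yields a point of the separation region within the claimed additive distance, and hence so does the dominated tuple $(rR(D_t))_t$.

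The main obstacle is the algebraic verification of the closed-form recursion for $B_t$, which is where the specific structure of the AWGN BC enters; everything else is telescoping. A secondary concern is feasibility, namely $\beta_t \ge 0$ for all $t$ and in particular $\beta_T = P - B_{T-1} \ge 0$, which amounts to monotonicity of $B_t$ in $t$ together with the condition $A_{T-1} \le 1 + P/N_T$; in borderline cases one would collapse trailing layers (allocating zero power to the weakest receivers whose point-to-point capacities are already exhausted by the leading layers) without enlarging the common gap.
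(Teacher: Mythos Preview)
Your proposal is correct and follows essentially the same route as the paper: allocate superposition power so that each of the first $T-1$ layers carries exactly the capacity drop $C_t - C_{t+1}$, solve the resulting recursion (your $B_t$ is the paper's $\alpha_t P$), and read off the residual gap at receiver $T$ as $C_{AWGN}(B_{T-1}/N_T)$. Your write-up is in fact a bit more explicit than the paper's---you give the intermediate closed form for $B_t$, you spell out that the gap $\Delta_t$ is the \emph{same} constant for every receiver rather than just measuring along the $T$-th axis, and you flag the feasibility constraint $\beta_T\ge 0$, which the paper leaves implicit.
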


\begin{proof}
In order to bound the distance of an achievable point from the classical capacity region, we compute the shift of the point from the capacity region along the direction of the most degraded $T$-th user axis.

Any achievable point does not exceed the point-to-point channel capacity of each user. So, a sufficient requirement is to bound the distance of the point that simultaneously achieves all point-to-point channel capacities, i.e the total information rate transmitted to the $t$-th receiver is
\[ C_{AWGN}(P / N_t) \triangleq  \log (1 + P / N_t)\]
To achieve this simultaneously, one should carry information to each receiver at a rate equals to the difference between its point-to-point capacity and the point-to-point capacity of the next more-degraded receiver, namely
\[ C_{AWGN}(P / N_t) - C_{AWGN}(P / N_{t+1}). \]
Using a power allocation according to $\{\alpha_t\}_0^T$, i.e, a ratio of $\alpha_t - \alpha_{t-1}$ of the total power $P$ is used for the $t$-th message, the classical AWGN broadcast channel separation-based capacity region is
\[ \log \left(1 +  \frac{(\alpha_t - \alpha_{t-1})P}{N_t + \alpha_{t-1}P} \right) = C_{AWGN}\left(\frac{(\alpha_t - \alpha_{t-1})P}{N_t + \alpha_{t-1}P} \right).\]

As in Theorem \ref{theorem:refined_bound} we use separation to inductively attain the requirement of the receivers, but this time starting from the first receiver. For each receiver $t$ except for the last $T$-th one, we will allocate the exact power required to carry the message, namely we choose $\alpha_t$ satisfying
\begin{multline*}
C_{AWGN}\left(\frac{(\alpha_t - \alpha_{t-1})P}{N_t + \alpha_{t-1}P} \right) =
\\ C_{AWGN}(P / N_t) - C_{AWGN}(P / N_{t+1}).
\end{multline*}
This yields the recursive relation
\[ \alpha_t(1+P/N_{t+1}) = \alpha_{t-1}(1+P/N_t) + 1 - N_t/N_{t+1} \]
and its solution for $\alpha_0 = 0$ is
\[ \alpha_{T-1}(1+P/N_T) = \sum_{t = 1}^{T-1} 1 - N_t/N_{t+1}. \]
The last, most degraded receiver uses $\alpha_T = 1$ achieving an information rate per channel use of
\[ C_{AWGN}\left(\frac{(1 - \alpha_{T-1})P}{N_T + \alpha_{T-1}P} \right) \]
Subtracting this rate from the rate of the best achievable point, yields the following information gap
\begin{multline*}
C_{AWGN}(P / N_T) - C_{AWGN}\left(\frac{(1 - \alpha_{T-1})P}{N_T + \alpha_{T-1}P} \right) =
\\ C_{AWGN}(\alpha_{T-1} P / N_T)
\end{multline*}
\end{proof}

\begin{cor}
The classical separation-based capacity region of the AWGN broadcast channel with $T$ receivers cannot be exceeded with more than $ \log_2 T $ bits per complex degree of freedom.
\end{cor}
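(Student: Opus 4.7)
The plan is to derive the corollary directly from Theorem \ref{theorem:awgn} by bounding each factor in its gap expression
\[ C_{AWGN}\!\left(\frac{P}{P + N_T}\sum_{t = 1}^{T-1}\left(1 - \frac{N_t}{N_{t+1}}\right)\right) \]
by a universal quantity depending only on $T$. First I would observe that since the broadcast channel is degraded we have the ordering $N_1 \le N_2 \le \cdots \le N_T$, so every term $1 - N_t/N_{t+1}$ lies in the interval $[0,1)$. The sum therefore has $T-1$ nonnegative terms, each strictly less than $1$, and so is bounded above by $T-1$.

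Next I would bound the prefactor: $P/(P + N_T) \le 1$ whenever $N_T \ge 0$, which is automatic for a meaningful noise variance. Combining these two observations, the argument of $C_{AWGN}$ is at most $T - 1$, and by monotonicity of $C_{AWGN}(x) = \log_2(1+x)$ we conclude
\[ C_{AWGN}\!\left(\frac{P}{P + N_T}\sum_{t = 1}^{T-1}\left(1 - \frac{N_t}{N_{t+1}}\right)\right) \le C_{AWGN}(T-1) = \log_2 T. \]
Since Theorem \ref{theorem:awgn} guarantees that any achievable derived rate-tuple lies within this quantity (measured per complex degree of freedom along the most-degraded-user direction) of the separation-based capacity region, the corollary follows.

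I do not anticipate a real obstacle: the argument is just a two-line monotonicity and telescoping-style estimate on top of Theorem \ref{theorem:awgn}. The only subtlety worth flagging is the interpretation of ``cannot be exceeded with more than $\log_2 T$ bits per complex degree of freedom'': the theorem measures the gap along the $T$-th user's axis, so the corollary should be stated as bounding the shift of any achievable derived rate-tuple from the separation region in that direction; all other coordinates are by construction no larger than their point-to-point capacities, which the separation scheme already attains for the non-degraded users in the inductive construction of the proof of Theorem \ref{theorem:awgn}.
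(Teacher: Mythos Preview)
Your proposal is correct and follows exactly the paper's own argument: bound $P/(P+N_T)<1$ and each $1-N_t/N_{t+1}<1$, so the argument of $C_{AWGN}$ is at most $T-1$ and the gap is at most $\log_2 T$. Your additional remarks on the degraded ordering and the interpretation of the gap along the $T$-th user's axis are accurate and consistent with the proof of Theorem~\ref{theorem:awgn}.
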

\begin{proof}
This result immediately follows from Theorem \ref{theorem:awgn} by using
\[ \frac{P}{P + N_T} < 1 \]
\[ \forall t=1,...,T-1: \;\; 1 - \frac{N_t}{N_{t+1}} < 1 \]
\end{proof}

\begin{cor}
The classical separation-based capacity region of the AWGN broadcast channel with any number of receivers and a bounded noise range $[N_{min}, N_{max}]$ cannot be exceeded with more than $ \log_2 (1+\ln(N_{max}/N_{min})) $ bits per complex degree of freedom.
\end{cor}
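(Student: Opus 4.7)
The plan is to chain two elementary inequalities onto the bound supplied by Theorem \ref{theorem:awgn}. Starting from the gap
\[ C_{AWGN}\!\left(\frac{P}{P + N_T}\sum_{t=1}^{T-1}\left(1 - \frac{N_t}{N_{t+1}}\right)\right), \]
I would first drop the factor $P/(P+N_T)$ using the trivial bound $P/(P+N_T) < 1$ (as in the previous corollary), and then concentrate on bounding the sum $S_T := \sum_{t=1}^{T-1}(1 - N_t/N_{t+1})$ uniformly in $T$ in terms of only $N_{min}$ and $N_{max}$.

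The key observation is the pointwise inequality $1 - x \le \ln(1/x)$ valid for $x \in (0,1]$, which follows from concavity of $\ln$ (the tangent to $\ln$ at $x=1$ lies above it). Applying this with $x = N_t/N_{t+1} \in (0,1]$ (since the channel is degraded, $N_t \le N_{t+1}$) gives
\[ 1 - \frac{N_t}{N_{t+1}} \;\le\; \ln\!\frac{N_{t+1}}{N_t}. \]
The resulting sum telescopes:
\[ S_T \;\le\; \sum_{t=1}^{T-1}\ln\!\frac{N_{t+1}}{N_t} \;=\; \ln\!\frac{N_T}{N_1} \;\le\; \ln\!\frac{N_{max}}{N_{min}}, \]
where the last inequality uses $N_1 \ge N_{min}$ and $N_T \le N_{max}$. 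Since $C_{AWGN}(x)=\log_2(1+x)$ is monotone increasing, combining with the drop of $P/(P+N_T)$ yields
\[ \text{gap} \;\le\; \log_2\!\bigl(1 + \ln(N_{max}/N_{min})\bigr), \]
which is the claimed bound.

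There is essentially no difficulty beyond recognizing the right elementary inequality. The only step that needs any thought is the choice of $1 - x \le \ln(1/x)$ over the looser bound obtained by, say, AM-GM after fixing $\prod N_t/N_{t+1} = N_1/N_T$: the tangent-line inequality immediately eliminates the dependence on $T$ and lets the telescoping do the rest. Consequently the obstacle is conceptual rather than computational — one must notice that the sum in Theorem \ref{theorem:awgn} is controlled by a telescoping logarithmic bound, mirroring the passage from (\ref{eq:refined_bound_T}) to (\ref{eq:refined_bound}) in the multiplicative setting.
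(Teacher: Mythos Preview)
Your proof is correct. Both you and the paper start from Theorem \ref{theorem:awgn}, drop the prefactor $P/(P+N_T)<1$, and bound $S_T=\sum_{t=1}^{T-1}(1-N_t/N_{t+1})$ by $\ln(N_{max}/N_{min})$; the only difference is how that last bound is obtained. The paper invokes ``the asymptotic derivation used in the proof of Theorem \ref{theorem:refined_bound}'': it first maximizes $S_T$ over the intermediate noise levels to identify the geometric worst case $(T-1)\bigl(1-(N_{min}/N_{max})^{1/(T-1)}\bigr)$ and then uses that this quantity increases in $T$ toward its limit $\ln(N_{max}/N_{min})$. You bypass the optimization entirely by applying the tangent-line inequality $1-x\le\ln(1/x)$ termwise and telescoping. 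Your route is more self-contained and avoids the (unstated in the paper) monotonicity argument needed to pass from the finite-$T$ maximum to the asymptotic value; the paper's route has the advantage of making explicit that the bound is actually attained in the worst case as $T\to\infty$.
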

\begin{proof}
The corrolary follows immediately from Theorem \ref{theorem:awgn} by using the asymptotic derivation used in the proof of Theorem \ref{theorem:refined_bound}.
\end{proof}

We note that the bound of Theorem \ref{theorem:awgn} can be improved by refining the trade-off between the last receiver and all previous receivers. For example, one can use linear time-sharing to spread the rate gap of the most remote receiver to its predecessors.

\ifdefined\whole

\begin{figure*}
\includegraphics[width=\linewidth]{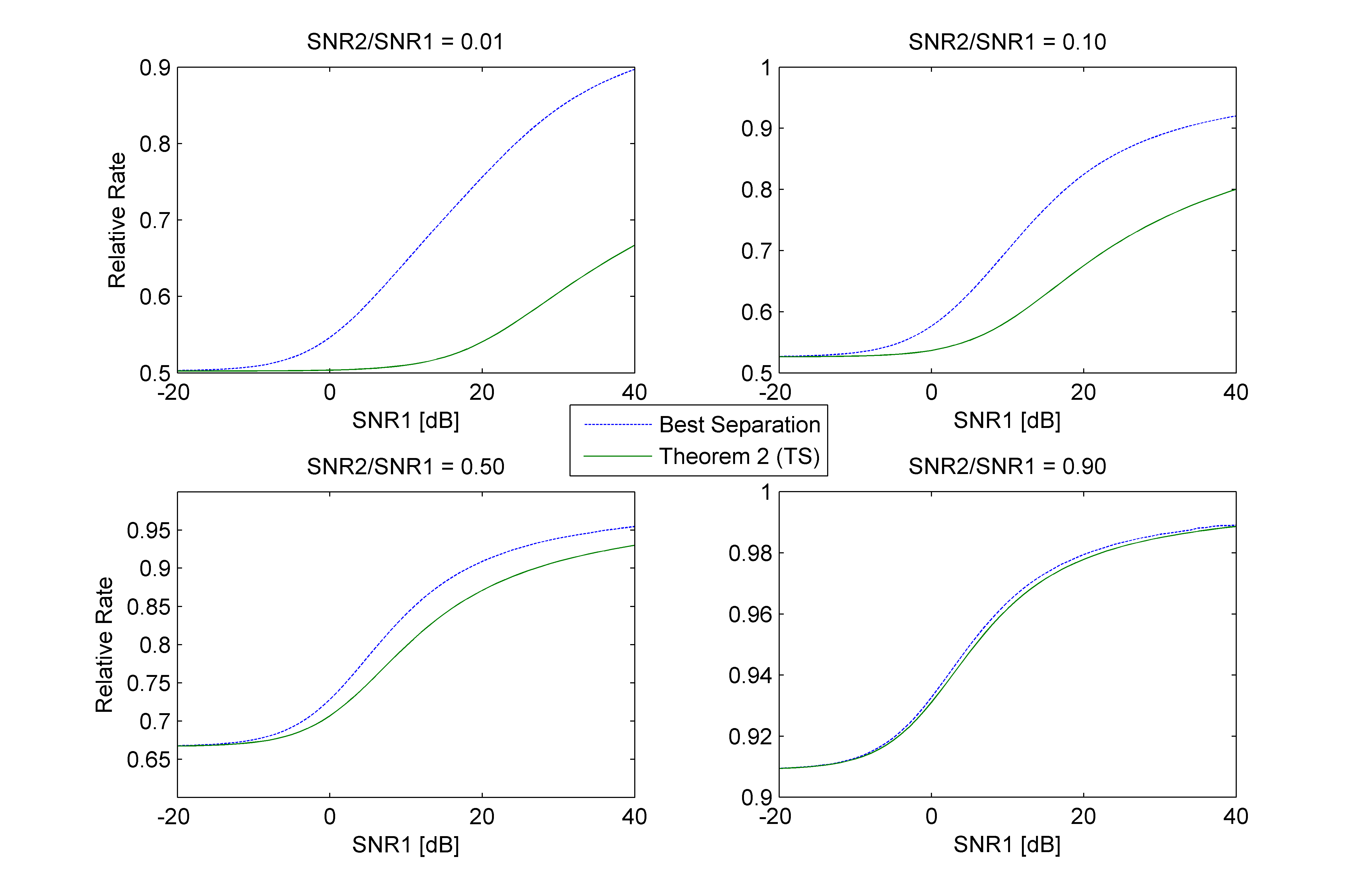}
\caption{Rate relative to joint-coding for two users.}
\label{fig:awgn}
\end{figure*}

\begin{figure*}
\includegraphics[width=\linewidth]{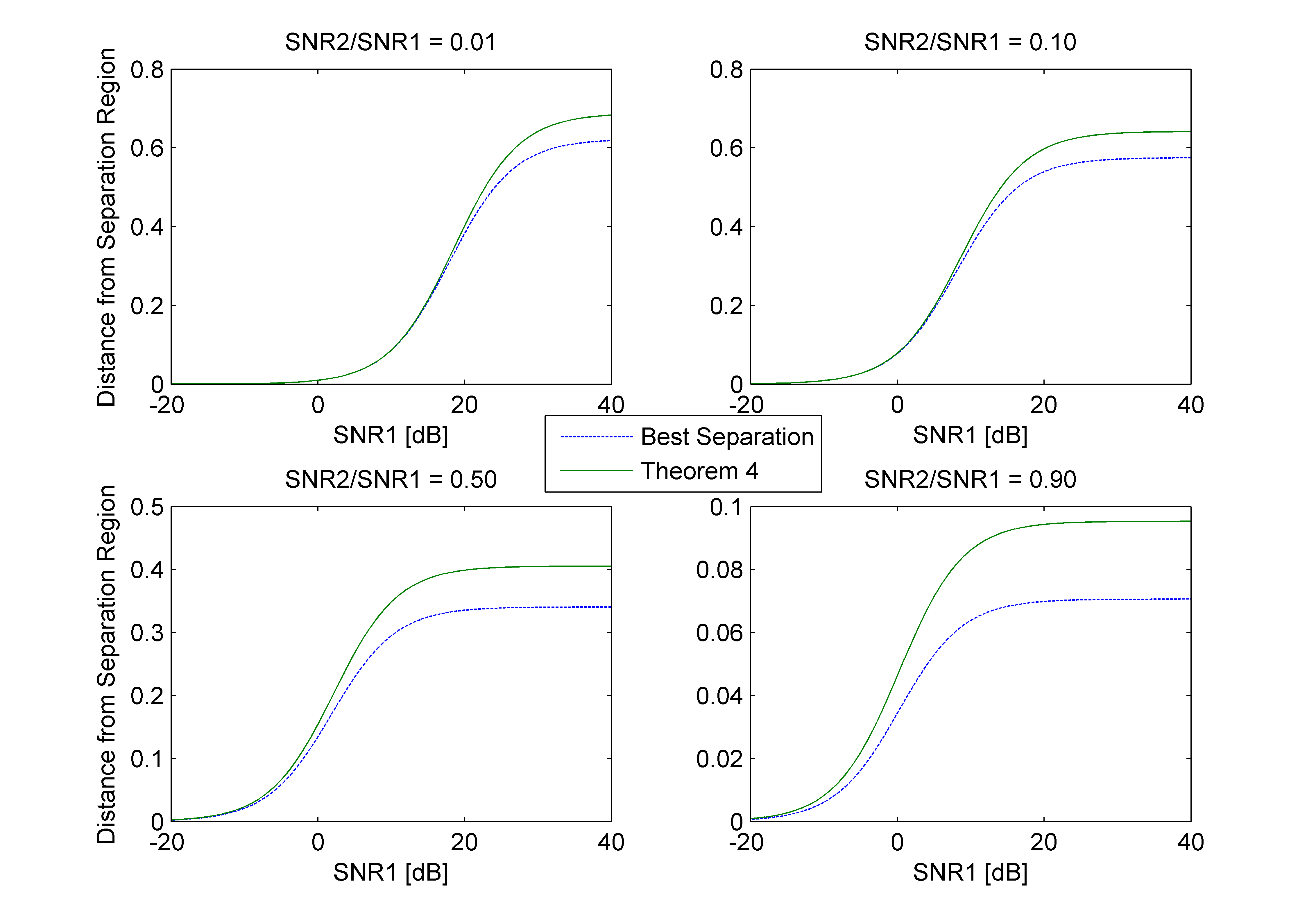}
\caption{Distance of the best achievable point from the separation-based capacity region for two users.}
\label{fig:awgn_distance}
\end{figure*}

\subsection{Examples}
\label{subsec:awgn_finite}

We will use the best achievable point as a reference for analyzing the performance of the optimal separation method and the results of the bounds given in Theorems \ref{theorem:naive_bound}, \ref{theorem:refined_bound} and \ref{theorem:awgn}. Fig. \ref{fig:awgn} plots the achievable rates of both an optimal separation-based scheme as well as the result of Theorem \ref{theorem:refined_bound} achieved by time-sharing, for the case of two receivers ($T=2$), relative to the best achievable point. The graphs are plotted for four values of noise ratios $N_1/N_2 = 0.01, 0.1, 0.5, 0.9$ and x-axis is the signal-to-noise ratio (SNR) of the better receiver $P/N_1$.  Fig. \ref{fig:awgn_distance} plots the distance of the best achievable point from the separation-base capacity region and the bound on this distance resulting from Theorem \ref{theorem:awgn}.

Fig. \ref{fig:awgn} illustrates that as the SNR decreases the performance of time-sharing and optimal separation are comparable, thus in low-SNR time-sharing becomes a good separation-based method. Nevertheless, we see that using separation will have a large negative impact on the performance in low SNR values, up to a factor of $2$, as suggested by Theorem \ref{theorem:naive_bound}. But according to both Fig. \ref{fig:awgn} and Fig. \ref{fig:awgn_distance} as the SNR increases separation method approaches the joint coding performance, because although the rates grow their difference remains bounded.

\begin{figure}
\includegraphics[width=\linewidth]{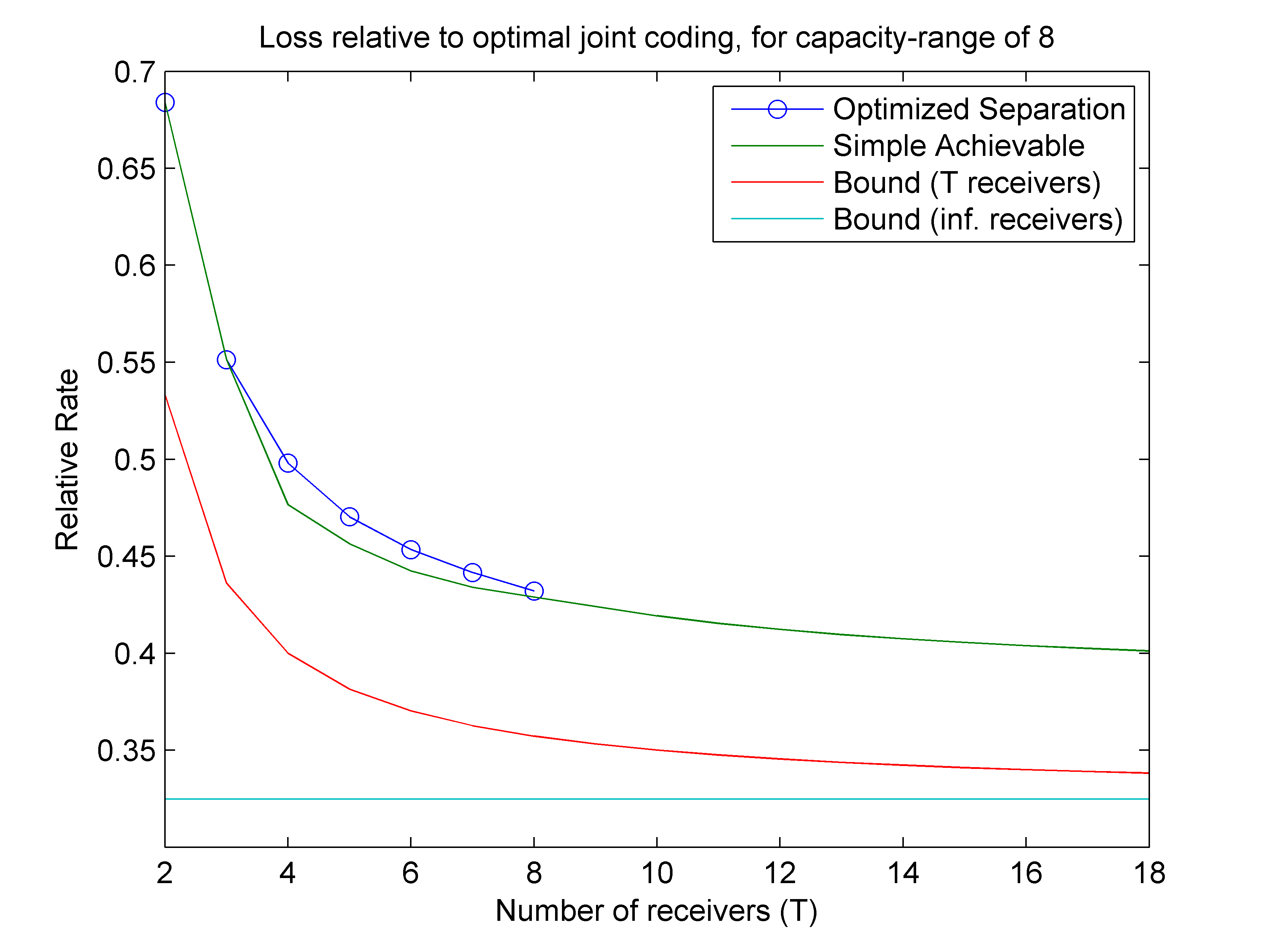}
\caption{Rate relative to joint-coding for many users.}
\label{fig:users}
\end{figure}

Fig. \ref{fig:users} illustrates the behavior as the number of receivers increases. Vertical axis values are rates relative to the rate of optimal joint source-channel coding. We can see the optimal separation-based method up to eight users. Beyond this the optimal strategy is too hard to compute, so we use a simpler near-optimal strategy \footnote{Instead of optimizing $T$ independent power values allocated for each receiver, the near optimal strategy uses optimization with only two degrees of freedom. It assumes that power amounts are a linear function of $t$, except for the power allocated to transmit the message to the most remote receiver, these three optimization variables combined with the total power constraint result in a traceable optimization with two degrees of freedom.}. The results are given for choosing a geometric sequence of capacities as suggested by Theorem \ref{theorem:refined_bound}, and for $C_{max}/C_{min} = R_{max}/R_{min} = 8$ and values are chosen such that their geometric mean is $1$ bits per symbol. We can see that for large number of receivers the proposed time sharing bound achieves a third of the joint coding performance, while the best separation-based strategy achieves about $5$ percent more.

\else

We refer the reader to the complete version of this paper \cite{Complete} for further discussion and examples of the AWGN scenario.

\fi

\section{Conclusions and Future Work}
\label{sec:future}

This work proposes useful bounds on the rate lost due to source-channel separation and time-sharing for refining a source through a broadcast channel. Following our results we argue that separation-based codes as well as applying time-sharing are both a practical compromise between design and computational complexity and transmission performance, regardless of the amount of receivers.

All outcomes for broadcast channels are naturally applicable for the scenario of point-to-point source transmission over a channel with channel quality unknown in the transmitter, when the receiver should recover the best possible source quality for the actual quality of the channel.

Section \ref{sec:bounds} provides a set of general bounds. Theorem \ref{theorem:naive_bound} can be trivially applied to any source broadcasting scenario, Theorem \ref{theorem:refined_bound} requires only the values of the classical point-to-point capacities or rate distortion, and Theorem \ref{theorem:capacity_bound} combines both into a tighter bound.
Section \ref{sec:awgn} discusses the case of AWGN channels, providing an \emph{additive} bound on the capacity region. This result is complementary to the bounds suggested by \cite{Tian09}, studying AWGN channels instead of Gaussian sources. Note that while Theorem \ref{theorem:awgn} bounds the rate loss for a single channel use, \cite{Tian09} and \cite{Tian10} provide bounds in terms of loss per source symbol.

Our work suggests some future research. The approach of \cite{Tian10} can be used to construct additive bounds for AWGN-like channels, e.g AWGN channel with general input constraint.

The use of time-sharing for constructing lower-bounds of separation-loss may be applicable in other joint source-channel problems where separation is not optimal, such as a MAC with correlated sources and solving a CEO problem through a MAC. The approach may be used to compute some general single-letter lower-bounds for other problems with no known single-letter capacity region, such as the interference channel.

\ifdefined\whole
Another insight from our work, and also from the work of Etkin et al. \cite{within} is the special property of AWGN multi-terminal networks. We make the following conjecture:

\begin{conjecture}
\label{conj:awgn}
For any multi-terminal AWGN information network, an achievable single letter solution is within a bounded distance from the optimal solution, regardless of the SNR value.
\end{conjecture}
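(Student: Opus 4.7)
The plan is to build on bounded-gap characterizations that are already available for several canonical AWGN topologies and to stitch them into a general framework. For the degraded broadcast channel Theorem \ref{theorem:awgn} already supplies such a bound; for the interference channel the Etkin--Tse--Wang ``within one bit'' result plays the same role; and for general Gaussian relay networks the Avestimehr--Diggavi--Tse linear deterministic approximation yields a constant-gap characterization of the cut-set upper bound. My first step would be to catalogue these results and to extract a common structural ingredient from each: a Gaussian superposition / rate-splitting codebook, matched with a cut-set or entropy-power-inequality converse, in which both sides track each other as SNR grows so that only a topology-dependent offset remains.

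The second step is to try to unify these ingredients through the linear deterministic model. Concretely one would (i) replace every point-to-point AWGN link in the network by its linear deterministic counterpart at the appropriate bit level, (ii) invoke a single-letter characterization of the deterministic network capacity, available for many families via max-flow/min-cut type arguments, and (iii) show that the deterministic region sits within an SNR-independent additive distance of the AWGN region. The third step is an explicit achievability construction that translates the deterministic strategy back to the Gaussian setting: each deterministic bit-pipe is emulated by a Gaussian codebook whose rate is within an additive constant of the bit-pipe rate, with the constant depending only on the network's topology and on crude fan-in / fan-out quantities, not on SNR.

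The main obstacle, I expect, is that for several fundamental multi-terminal problems a matching single-letter converse is not known up to a bounded gap. Relay networks with general cooperation, broadcast with private messages and many antennas, and interference channels with more than two users are examples where the best known cut-set or genie arguments leave an SNR-dependent gap. Closing these gaps, or supplying SNR-independent replacements for them, is in my view the real content of the conjecture; the translation from a bounded-gap converse to a bounded-gap single-letter achievability scheme is relatively mechanical once such a converse is in hand. A secondary obstacle is to make precise what counts as a ``multi-terminal AWGN information network'': without a restriction to some causal or acyclic structure, the very notion of a single-letter expression is fragile. I would therefore first prove the conjecture under a topology restriction, for instance layered networks with per-terminal distortion constraints, and only then attempt to remove the restriction by a reduction argument.
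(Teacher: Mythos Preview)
This statement is labeled a \emph{Conjecture} in the paper and is not proved there; the authors explicitly write that proving it ``may seem difficult'' and only sketch a possible route via reduction to deterministic channels, citing the two-user interference channel \cite{deterministic_awgn} and the many-to-one case \cite{many2one} as precedents. There is therefore no proof in the paper for you to match.

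Your proposal is not a proof either, and you are honest about this: you correctly flag that the missing single-letter converses for general relay networks, $K$-user interference, and broadcast with private messages are the real obstruction, and that the deterministic-to-Gaussian translation is the comparatively mechanical part. That diagnosis is accurate, and your suggested route through the linear deterministic model is essentially identical to the paper's own suggestion. The one substantive thing you add beyond the paper is the observation that the phrase ``multi-terminal AWGN information network'' needs to be pinned down (causal/acyclic structure, what counts as single-letter) before the statement is even well-posed; that is a fair point and the paper does not address it. But to be clear: neither you nor the paper has a proof, and the obstacles you list are genuine open problems, so what you have written is a reasonable research plan rather than a proof attempt in any meaningful sense.
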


For example, in the $K$-user interference channel the interference alignment method proposed by Cadambe and Jafar \cite{alignment} achieves an optimal asymptotic performance for high-SNR AWGN interference channels, but the absolute gap of this method from the optimal performance is not proven to be bounded. Our conjecture may lead to a single letter generalized Han-Kobayashi scheme in the spirit of interference alignment achieving the capacity region to within constant gap.

Proving Conjecture \ref{conj:awgn} may seem difficult, but we suggest the use of deterministic channels to establish a proof. For many information network problems the single letter solution is optimal when all channels are deterministic. Thus, we suggest that a reduction can be made from any high-SNR AWGN problem to a problem with deterministic channels. If the analogous deterministic problem has a single letter optimal solution, then Conjecture \ref{conj:awgn} can be proven by bounding the difference between the original AWGN problem and the deterministic analogy. This kind of argument was already established for interference channels with $2$ users \cite{deterministic_awgn} and for channels with partial interference \cite{many2one}.

\fi

\end{document}